\documentclass[10pt,a4paper]{article}
\usepackage[margin=0.88in]{geometry}
\usepackage[T1]{fontenc}
\usepackage{lmodern}
\usepackage{amsmath,amssymb,amsthm,booktabs,array,microtype}
\usepackage{tikz}
\usepackage{hyperref}
\hypersetup{colorlinks=true,linkcolor=black,citecolor=blue,urlcolor=blue,pdftitle={From Link Availability to Deadline-Constrained Key Service},pdfauthor={Hasan Abbas Al-Mohammed}}
\usepackage{enumitem}
\setlist[itemize]{leftmargin=*,topsep=2pt,itemsep=1pt}
\newtheorem{proposition}{Proposition}
\newtheorem{observation}{Observation}
\newcommand{\pp}[1]{\left[#1\right]_{+}}
\title{\bfseries From Link Availability to Deadline-Constrained Key Service:\\A Feasibility Framework for Mobile Free-Space Quantum Key Distribution}
\author{Hasan Al-Mohammed}

\begin{document}
\maketitle
\begin{abstract}
Mobile free-space quantum key distribution (QKD) links can provide secret bits intermittently, whereas applications may need cryptographic key material at specified times. Optical feasibility, positive finite-key output, and deadline-compliant key service are consequently different claims. This technical note formalizes their separation through a time-indexed, security-qualified arrival process and a key-inventory model. Under explicitly stated assumptions, an elementary prefix-feasibility condition characterizes whether all demands can be served without borrowing keys from future contact windows. A stylized finite-block model then illustrates why positive long-run optical availability does not ensure positive finite-key output when observations are fragmented into short contacts. Two transparent numerical examples distinguish total key yield from key delivery time and show the conditional benefit of pooling data across acquisition windows. The analysis is intended as a systems-level accounting and benchmarking framework, not a new QKD security proof or a demonstration of hardware performance. It identifies measurements and security assumptions needed to evaluate train, UAV, high-altitude-platform, and satellite-assisted optical QKD against application-level service requirements.
\end{abstract}
\noindent\textbf{Keywords:} quantum key distribution; free-space optics; finite-key analysis; intermittent connectivity; secret-key inventory; service feasibility.

\section{Introduction}
Security protocols and operational services are evaluated with different metrics. The BB84 protocol establishes a basis for quantum key distribution (QKD) \cite{bb84}, decoy-state methods support weak-coherent implementations \cite{lo2005}, and finite-key security analyses quantify the price of finite observations and post-processing \cite{lim2014,scaranirenner}. A network user, however, does not directly consume received photons, successful detection events, or an asymptotic secret-key rate. It consumes available, authenticated, usable key material at a particular time.

This discrepancy is especially important for free-space optical (FSO) links, whose collection efficiency, extinction, pointing, and visibility are geometry and time dependent \cite{khalighi2014,henniger2010}. Train studies have examined contact geometry, weather, and the combination of FSO with QKD \cite{train2022,train2023,train2024,uhst2022}; UAV studies illustrate power-weather tradeoffs \cite{uav2024}; satellite experiments show non-terrestrial QKD feasibility under specific conditions \cite{liao2017}. Relay-based high-altitude platform (HAP) architectures add trust boundaries and per-hop availability \cite{hap2025,hap2026}. None of these application settings licenses the inference that an optical link with positive received power always supplies a ready-to-use secret key.

Early protocol simulation work and an ML-oriented study of key-length effects further motivate the distinction between computed or predicted quantities and a formal security certificate \cite{simqkd2021,keylength2024}. A recent interface study links classical FSO budgets to decoy-state finite-key estimates \cite{bridge2026}. The present note addresses a \emph{different, subsequent question}: even if that calculation produces nonnegative, security-qualified keys during contact windows, are those keys available when applications need them? The distinction is also relevant to threat monitoring and adaptive processing: classifiers and learned estimators can support operation but are not substitutes for the physical assumptions and proofs used to certify a secret key \cite{detect2021,nn2021,cascade2024,defense2026,specialized2026}.

\paragraph{Contributions and boundaries.} We (i) distinguish optical-link feasibility, block-level security-qualified output, and deadline-constrained service; (ii) state and prove a simple prefix-demand feasibility criterion under idealized inventory assumptions; (iii) analyze finite-block fragmentation with an illustrative surrogate, without interpreting it as a cryptographic bound; and (iv) specify a reproducible data ledger for operational comparisons. The mathematical statements are elementary systems accounting, not new security theorems. We introduce no new QKD protocol, hardware experiment, or validated rate prediction.

\section{Three feasibility gates and the scope of prior work}
For a moving or intermittent optical link, \textit{three distinct gates} should be evaluated in order (Fig.~\ref{fig:gates}).
\begin{enumerate}[leftmargin=*,itemsep=1pt]
\item \textbf{Physical contact:} the source, receiver, geometry, pointing, atmosphere and acquisition process yield a valid observation window.
\item \textbf{Secret-key certification:} the chosen QKD protocol, authenticated classical communication, device model and finite-key analysis yield a nonnegative composably secure output \emph{under their stated assumptions}.
\item \textbf{Service delivery:} the resulting keys are present in the right trust domain before demand deadlines, after any reservations or usage overheads.
\end{enumerate}
\begin{figure}[t]
\centering
\begin{tikzpicture}[>=stealth, every node/.style={font=\small}]
\node[draw,rounded corners,align=center,minimum width=3.6cm,minimum height=1.0cm] (p) at (0,0) {Physical gate\\contact / acquisition};
\node[draw,rounded corners,align=center,minimum width=3.6cm,minimum height=1.0cm] (s) at (4.55,0) {Security gate\\finite-key certification};
\node[draw,rounded corners,align=center,minimum width=3.6cm,minimum height=1.0cm] (d) at (9.1,0) {Service gate\\inventory / deadline};
\draw[->,thick] (p) -- node[above,font=\scriptsize] {observations} (s);
\draw[->,thick] (s) -- node[above,font=\scriptsize] {usable bits} (d);
\end{tikzpicture}
\caption{A positive result at one gate does not imply a positive result at the next. The boxes describe an analysis sequence, not a new QKD security proof.}\label{fig:gates}
\end{figure}

\paragraph{Related classical-security and quantum-technology boundaries.}
A practical key service also relies on a classical control plane, but its performance and trust requirements must be assessed with task-specific evidence. A prior comparison of CPU and GPU hardware for cryptocurrency mining \cite{mining2020} investigates a different computational workload; it is relevant here only as an example of why the throughput, energy, and resource costs of computing platforms must be measured for the \emph{actual} workload. Cryptocurrency-mining benchmarks and financial comparisons cannot be used to estimate QKD error-correction or privacy-amplification time. Likewise, blockchain-based vehicle authentication and traffic-event validation \cite{vehicleblockchain2020} address identity, provenance, and trust in transportation systems, not the composable secrecy of QKD-generated key bits. Neither application replaces the authenticated classical communication and security analysis required in QKD.

A further scope distinction concerns the physical task itself. Survey-level discussion of quantum-computer architecture and cryptanalytic applications \cite{qcomputer2020} gives broader motivation for quantum-safe communications, whereas quantum radar \cite{qradar2020} concerns sensing rather than secret-key distribution. These related quantum technologies are \emph{not} evidence of optical-contact availability, finite-key security, or deadline service in the present framework. We include them to make the system and security boundaries explicit, not to extrapolate their findings into QKD claims.

Table~\ref{tab:prior} maps neighboring technical questions to representative research. The listed studies make different claims; no common numerical benchmark or experimental equivalence is implied.
\begin{table}[t]
\centering\small
\caption{Selected existing evidence and the additional time-domain question.}\label{tab:prior}
\begin{tabular}{p{0.215\linewidth}p{0.365\linewidth}p{0.335\linewidth}}
\toprule
\textbf{Question} & \textbf{Representative literature} & \textbf{Still needed for service claims} \\
\midrule
Contact feasibility & FSO propagation surveys \cite{khalighi2014,henniger2010}; mobile train and UAV models \cite{train2022,train2023,uav2024} & Acquired-window traces, realistic outages, time-varying pointing and visibility.\\
QKD conversion & Decoy states and finite-key methods \cite{lo2005,lim2014}; railway QKD \cite{train2024}; link-budget interface \cite{bridge2026} & Secure bits \emph{per acquired window}, with protocol- and trust-qualified assumptions.\\
Relays and routes & MDI-QKD \cite{mdi2012}; satellite QKD \cite{liao2017}; trusted-HAP studies \cite{hap2025,hap2026} & Per-hop key ownership, trusted-node boundaries, routing and replenishment delays.\\
Adaptive operations & Detector learning \cite{detect2021,nn2021}, Cascade-oriented ML \cite{cascade2024}, synthesis \cite{defense2026,specialized2026} & Explicit separation of prediction/selection from proof-derived key bounds.\\
Application context & QKD for IoT/6G \cite{iot2021,master2021,applications2024}; network convergence \cite{metaverse2025} & Time-indexed key demand, buffering, authentication cost and deadlines.\\
\bottomrule
\end{tabular}
\end{table}

\section{A time-indexed key-service model}
Partition the operating horizon into slots $t=1,\dots,T$, chosen so that key-delivery and application-demand timing can be represented. For each acquired contact window $j$, let $\mathcal{D}_j$ denote the measured and authenticated data entering an \emph{applicable} QKD security analysis. Its certified usable yield is
\begin{equation}
K_j = \pp{\ell_{\rm sec}(\mathcal{D}_j;\epsilon_j)-c_j},\qquad K_j\geq 0,
\label{eq:yield}
\end{equation}
where $\ell_{\rm sec}$ is the output of a specified finite-key treatment and $c_j\geq0$ denotes any key bits separately reserved for authentication or other required uses \emph{if not already included} in that treatment. The notation deliberately leaves the security proof external; it must not be evaluated by simply inserting classical optical throughput into $\ell_{\rm sec}$. If a valid analysis produces no secret key, set $K_j=0$. An invalid model is a failure of the security gate, not a zero-bit security certificate.

Each yield arrives only once completed post-processing makes it usable. Define $A_t=\sum_{j:\,\tau_j=t}K_j$, where $\tau_j$ is the first slot in which the key can be consumed. Let $D_t\geq0$ be the application's key demand due in slot $t$ and $B_0\geq0$ be usable initial inventory \emph{in the same trust domain}. We assume arrivals credited at the beginning of slot $t$ are available before $D_t$ must be served. Under full service, the inventory recurrence is
\begin{equation}
B_t=B_{t-1}+A_t-D_t,\qquad B_t\geq0.
\label{eq:inventory}
\end{equation}
The model assumes unlimited storage, no key expiration, no key transfers across mutually untrusted domains, and irreversible consumption of key material. If a task needs $q$ fresh key bits, it must receive $q$ appropriately generated bits; key reuse is not implicit. Finite classical-control latency can be represented in $\tau_j$.

\begin{proposition}[Prefix-feasibility criterion]\label{prop:prefix}
Under the assumptions above, all key demands $D_t$ can be met by their due slots if and only if
\begin{equation}
\boxed{\quad B_0+\sum_{t=1}^{u}A_t \;\geq\; \sum_{t=1}^{u}D_t\quad \text{for every }u=1,\ldots,T.\quad}
\label{eq:prefix}
\end{equation}
\end{proposition}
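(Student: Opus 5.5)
The plan is to unroll the inventory recurrence~\eqref{eq:inventory} into closed form and read both directions of the equivalence off the telescoped identity. First I will fix what ``all demands can be met'' means under the stated assumptions: arrivals $A_t$ are credited at the start of slot $t$, consumption is irreversible, storage is unlimited, keys do not expire, and there is no borrowing from later slots or other trust domains. Under these assumptions, serving $D_t$ in slot $t$ is possible exactly when the inventory carried into slot $t$ plus $A_t$ is at least $D_t$. Full service is then equivalent to the sequence defined by $B_t=B_{t-1}+A_t-D_t$ satisfying $B_t\ge 0$ for every $t$. I will also note that serving exactly $D_t$, and no more, is without loss of generality, since consuming extra bits only lowers later inventory.

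Next I will telescope the recurrence by induction on $u$:
\[
B_u = B_0+\sum_{t=1}^{u}A_t-\sum_{t=1}^{u}D_t ,\qquad u=0,1,\dots,T.
\]
The base case $u=0$ is trivial. The inductive step substitutes the formula for $B_{u-1}$ into~\eqref{eq:inventory}.

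Both directions then follow from this identity.
\begin{itemize}
\item \emph{Necessity.} If every demand is met, then $B_u\ge0$ for all $u$ by the feasibility characterization above. Substituting the closed form gives~\eqref{eq:prefix}. The intuition is that the total bits consumed by slot $u$, namely $\sum_{t\le u}D_t$, cannot exceed the total bits that have become usable by then, namely $B_0+\sum_{t\le u}A_t$.
\item \emph{Sufficiency.} I will argue slot by slot, by induction. Assume the first $u-1$ demands have been served with $B_{u-1}\ge0$. The bits available in slot $u$ are $B_{u-1}+A_u$, which equals the left-hand side of~\eqref{eq:prefix} at index $u$ minus $\sum_{t<u}D_t$. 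By~\eqref{eq:prefix} this is at least $D_u$. So $D_u$ can be served and $B_u\ge0$.
\end{itemize}

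The main obstacle is not algebraic but definitional: pinning down that the modeling assumptions make ``feasible'' equivalent to the nonnegativity of the greedy recurrence. Two points need care. First, the ordering convention that arrivals precede demand within a slot must be used. Without it, the prefix sum in~\eqref{eq:prefix} would have to stop at $u-1$ for the arrivals. Second, no alternative service policy can do better. I will handle this with a one-line exchange argument: any feasible policy has cumulative consumption by slot $u$ at least $\sum_{t\le u}D_t$, and cumulative availability equal to $B_0+\sum_{t\le u}A_t$. This gives necessity for every policy and confirms that the greedy recurrence is optimal.
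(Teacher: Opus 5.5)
Your proof is correct and follows the same route as the paper: telescope the inventory recurrence into $B_u=B_0+\sum_{t=1}^{u}(A_t-D_t)$ and read both directions directly off the requirement $B_u\ge 0$. Your slot-by-slot induction for sufficiency, and your cumulative-consumption argument covering arbitrary service policies, only make explicit what the paper's two-line proof leaves implicit when it identifies feasibility with nonnegativity of the recurrence.
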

\begin{proof}
Summing the inventory equation through any prefix $u$ yields $B_u=B_0+\sum_{t=1}^{u}(A_t-D_t)$. Nonnegative inventory at every due slot therefore requires Eq.~\eqref{eq:prefix}. Conversely, if Eq.~\eqref{eq:prefix} holds for every $u$, the same expression gives $B_u\geq0$ for every slot, so the inventory can meet each demand without using future arrivals.
\end{proof}

The result is a deterministic bookkeeping test rather than a novel cryptographic theorem. It shows why $\sum_t A_t\geq\sum_t D_t$ by itself is inadequate: a secret key generated after a deadline cannot retroactively serve that deadline. With hard storage limits, expiration, network routing, or differentiated key pools, Eq.~\eqref{eq:prefix} remains a necessary condition for the aggregate pool but is generally not sufficient for the constrained system.

\paragraph{Robust demand and uncertain contact yields.} Suppose prospective security-qualified arrivals and demands obey independently selectable bounds $\underline A_t\le A_t\le\overline A_t$ and $\underline D_t\le D_t\le\overline D_t$. To guarantee all deadlines for \emph{every} sequence within this rectangular uncertainty set, Eq.~\eqref{eq:prefix} is equivalent to
\begin{equation}
 B_0+\sum_{t=1}^{u}\underline A_t\;\geq\;\sum_{t=1}^{u}\overline D_t
 \qquad (u=1,\ldots,T).
 \label{eq:robust}
\end{equation}
Necessity follows by selecting minimum arrivals and maximum demands up to a violating prefix; sufficiency follows because all other admissible prefixes only increase supply or decrease demand. The independence/rectangular assumption matters: correlated weather states, coupled links and constrained demand processes can make this a conservative test. The lower arrival values must be \emph{valid service-deliverable lower bounds}, not optimistic means or classical received-power predictions. A bound of zero for a fully blocked contact contributes no secret keys.

\paragraph{A useful deficit diagnostic.} For a planned schedule define
\begin{equation}
\Gamma=\max_{1\le u\le T}\pp{\sum_{t=1}^{u}D_t-B_0-\sum_{t=1}^{u}A_t}.
\label{eq:deficit}
\end{equation}
$\Gamma$ is the smallest \emph{additional initial inventory} required to satisfy all deadlines in this idealized single-pool model. This follows directly by increasing $B_0$ in Eq.~\eqref{eq:prefix}. It is not a measure of cryptographic secrecy or proof failure.

\section{Finite-block fragmentation: an illustrative threshold}
A channel may be physically available for several brief contacts without yielding positive per-contact secret keys. In a particular decoy-state protocol, the relevant penalty depends on observed counts, parameter-estimation methods, leakage, security parameters, and proof-specific finite-size terms \cite{lim2014,scaranirenner}. To illustrate the \emph{shape} of the tradeoff without claiming a validated QKD bound, introduce the surrogate
\begin{equation}
\widetilde{\ell}(n)=a n-b\sqrt{n}-c,
\qquad a>0,\quad b,c\geq0,
\label{eq:surrogate}
\end{equation}
where $n$ is an abstract processed-block size, $an$ an asymptotic contribution, and the remaining terms stand for schematic finite-size and fixed penalties. No particular QKD protocol has been shown to obey Eq.~\eqref{eq:surrogate} with the illustrative coefficients below.

\begin{observation}[Positive-output threshold for the surrogate]
The surrogate in Eq.~\eqref{eq:surrogate} is positive precisely when
\begin{equation}
n>n_{\rm crit}:=\left(\frac{b+\sqrt{b^2+4ac}}{2a}\right)^2.
\label{eq:threshold}
\end{equation}
\end{observation}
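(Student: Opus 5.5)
Substitute $x=\sqrt{n}\ge 0$, so that the surrogate in Eq.~\eqref{eq:surrogate} becomes the quadratic $q(x)=ax^2-bx-c$ with leading coefficient $a>0$. The plan is to locate the roots of $q$, read off its sign on the half-line $x\ge0$, and then translate back to $n$ using the monotonicity of $x\mapsto x^2$ on $[0,\infty)$.

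\textbf{Roots.} The discriminant is $b^2+4ac\ge b^2\ge 0$, so $q$ has real roots
\[
x_\pm=\frac{b\pm\sqrt{b^2+4ac}}{2a}.
\]
Since $\sqrt{b^2+4ac}\ge b$, we have $x_-\le 0\le x_+$. Because $a>0$, $q(x)>0$ exactly when $x<x_-$ or $x>x_+$. The region $x<x_-$ contains no nonnegative $x$, since $x_-\le0$. Hence for $x\ge0$, $q(x)>0$ if and only if $x>x_+$.

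\textbf{Translating back.} Because $x=\sqrt n$ and squaring is strictly increasing on $[0,\infty)$, the condition $\sqrt n>x_+$ with $x_+\ge0$ is equivalent to $n>x_+^2=n_{\rm crit}$. This is Eq.~\eqref{eq:threshold}.

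\textbf{Edge cases.} These are the only delicate point, and they concern the weak inequalities:
\begin{itemize}
\item If $c=0$, then $x_-=0$. The root $x=0$ gives $\widetilde\ell(0)=0$, which is not positive, so the strict inequality still gives exactly the positive set $n>(b/a)^2$.
\item If $b=c=0$, then $n_{\rm crit}=0$ and positivity holds for all $n>0$, which agrees with the formula.
\end{itemize}
No earlier result is needed; the argument is elementary algebra, consistent with the note's framing that this is an illustration and not a security bound.
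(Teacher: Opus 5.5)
Your proof is correct and follows the paper's own route: substitute $x=\sqrt{n}\ge 0$, reduce positivity to $ax^2-bx-c>0$, and note that on $x\ge 0$ this holds exactly above the nonnegative root $x_+$. Your handling of the nonpositive root $x_-$, the squaring step, and the degenerate cases $c=0$ and $b=c=0$ spells out what the paper's one-line argument leaves implicit.
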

\noindent Indeed, setting $x=\sqrt{n}\geq0$ reduces the condition to $ax^2-bx-c>0$, which holds above the positive root. This threshold is \emph{not} a protocol-independent minimum number of QKD pulses; it is a property of the stated surrogate only.

For $m$ equal blocks of size $n$, treating every block separately gives $m\max\{0,\widetilde\ell(n)\}$, whereas treating the same $mn$ observations as one block gives $\max\{0,\widetilde\ell(mn)\}$. The latter may be higher \emph{if} aggregation is permitted by the actual QKD protocol, proof, stationarity assumptions, and parameter-estimation procedure. Combining samples across distinct channel states, devices, basis choices or trust domains without an applicable proof would not be valid. In particular, this note does not propose aggregating certified outputs across incompatible QKD proofs.

\section{Transparent illustrative cases}
\subsection{Same total key supply; different delivery time}
Consider six abstract slots with $D_t=2$ kbit per slot and $B_0=6$ kbit. In Scenario A a valid contact produces $10$ kbit of usable, certified key at the beginning of slot 3; in Scenario B the same quantity arrives only at slot 6. In both cases total supply, including initial inventory, is $16$ kbit against a horizon demand of $12$ kbit.
\begin{table}[h]
\centering\small
\caption{Illustrative key-service ledger (kbit). These are assumed inputs, not simulated or measured QKD rates.}\label{tab:schedule}
\begin{tabular}{c*{6}{r}r}
\toprule
& $t=1$ & $2$ & $3$ & $4$ & $5$ & $6$ & Total\\
\midrule
Demand $D_t$ & 2 & 2 & 2 & 2 & 2 & 2 & 12\\
Scenario A arrival $A_t$ & 0 & 0 & 10 & 0 & 0 & 0 & 10\\
Scenario B arrival $A_t$ & 0 & 0 & 0 & 0 & 0 & 10 & 10\\
\bottomrule
\end{tabular}
\end{table}
Scenario A passes every prefix test: $B_0=6$ kbit covers the first two demands before replenishment. Scenario B first fails at slot 4 because cumulative demand reaches $8$ kbit while available cumulative supply is only $6$ kbit. Its worst deficit is $4$ kbit at slot 5, hence Eq.~\eqref{eq:deficit} gives $\Gamma_B=4$ kbit. A report consisting only of total secret bits across the horizon would miss this service outage.

\subsection{Fragments too short to clear a schematic finite-block threshold}
For a second, entirely illustrative case let $a=0.025$, $b=4$, $c=100$ in Eq.~\eqref{eq:surrogate}. Four contacts each provide $n=10{,}000$ abstract data units. Processing one contact at a time yields $\widetilde\ell(10{,}000)=250-400-100=-250$ bits, and thus no positive output from any block. If the data can lawfully form a single $40{,}000$-unit block, the surrogate gives $\widetilde\ell(40{,}000)=1{,}000-800-100=100$ bits. This contrast is \emph{only a numerical illustration of a finite-block effect}; neither set of values constitutes a finite-key security statement. Correct evaluation requires the real, protocol-specific counting statistics and security parameters.

\section{Toward an auditable benchmark for mobile FSO/QKD}
The framework is intentionally small enough to support a reproducible comparison without requiring a new security claim. For each time interval or contact window, a study should report at least the following:
\begin{enumerate}[leftmargin=*,itemsep=2pt]
\item \textbf{Contact trace:} source-receiver geometry, weather/attenuation or measured transmission, acquisition start and end times, pointing loss and outages. Models and field measurements must be identified separately.
\item \textbf{QKD observations:} source and decoy settings, emitted pulses, detection counts by relevant basis/intensity, QBER or equivalent observables, detector characteristics and accepted-window criteria.
\item \textbf{Security accounting:} specified finite-key proof, security failure budget, leakage, authenticated classical communication and explicit treatment of any blocks that fail to produce positive keys.
\item \textbf{Service accounting:} when each key becomes usable; trust domain, relay assumptions, initial inventory, application demand timestamps, latency, reserved authentication keys and discarded or expired bits. Report measured classical-processing duration and energy for the particular reconciliation, authentication, and privacy-amplification workload; do not substitute benchmarks from unrelated computational applications \cite{mining2020}.
\item \textbf{Outcomes:} secure bits per contact, percentage of contacts with $K_j>0$, total available keys, worst prefix deficit $\Gamma$, fraction of demand met by deadlines and any key-buffer constraints.
\end{enumerate}
This separation supports controlled sensitivity analyses: varying aperture or weather should change the physical input, not silently alter the key security definition; changing a machine-learning detector should change its operational decision metric, not imply a new secrecy proof \cite{detect2021,nn2021,defense2026}. A trusted HAP relay and an untrusted MDI station require different security models and cannot be assigned the same key ownership ledger \cite{mdi2012,hap2025,hap2026}. The PLOB rate-loss bound provides an additional context for assessing claims made for repeaterless quantum channels \cite{plob2017}.

\section{Limitations, non-overlap and future validation}
This note is deliberately narrower than a full systems experiment. There is no stochastic-optimal policy, calibrated outage model, or observed data trace. The queueing criterion assumes a single compatible inventory, unlimited storage and immediate crediting at the stated arrival slot. The finite-block expression in Eq.~\eqref{eq:surrogate} is not a security bound. Any operational result must use a specified finite-key treatment and verified input assumptions; an AI estimator cannot replace certified parameter estimation \cite{lim2014,defense2026,specialized2026}.

The manuscript also has a scope boundary relative to earlier work. The FSO-to-decoy-state study \cite{bridge2026} addresses the physical-to-key conversion; the present analysis begins with the \emph{output} of such a conversion and tests service timeliness. Prior HAP work \cite{hap2026} examines conditional relay availability and finite-key yields; it does not establish the single-pool deadline-feasibility result for application demand across time slots. Claims of empirical advantage, network-wide optimality or experimentally certified rate are therefore outside this manuscript.

An evidence-based next step is to combine measured acquisition windows with actual QKD event logs and application demand traces; compare fixed and adaptive scheduling under the same cryptographic proof; and test separate trust-domain buffers, authentication costs and hard key expirations. The model should be extended for multi-hop routing and delayed post-processing before claiming real-world service guarantees.

\section{Conclusion}
Practical QKD service requires more than an optical link that sometimes connects and more than a positive total secret-key yield. It requires correctly modeled security-qualified arrivals before application deadlines. A simple prefix-feasibility test makes that distinction auditable, while a stylized finite-block example demonstrates why contact fragmentation can matter even when total observation time is unchanged. These accounting tools neither strengthen nor replace a cryptographic proof. Their intended role is to connect physical-link studies and security calculations to explicitly time-resolved service evaluation.

\section*{Disclosure and manuscript status}
This is a new analytical draft for author verification. The illustrative data are synthetic and are not experimental results. The author should independently verify the technical argument, every reference, the novelty relative to prior work, arXiv category suitability and any applicable disclosure policies before submission.

\end{document}